\newcommand{\Rmnum}[1]{\expandafter\@slowromancap\romannumeral #1@}
\newtheorem{definition}{Definition}
\newtheorem{proposition}[definition]{Proposition}
\newtheorem{Theorem}[definition]{Theorem}
\newtheorem{conjecture}[definition]{Conjecture}
\newtheorem{remark}[definition]{Remark}
\newtheorem{example}[definition]{Example}
\newtheorem{question}[definition]{Question}
\def\squareforqed{\hbox{\rlap{$\sqcap$}$\sqcup$}}
\def\qed{\ifmmode\squareforqed\else{\unskip\nobreak\hfil
		\penalty50\hskip1em\null\nobreak\hfil\squareforqed
		\parfillskip=0pt\finalhyphendemerits=0\endgraf}\fi}
\def\endenv{\ifmmode\;\else{\unskip\nobreak\hfil
		\penalty50\hskip1em\null\nobreak\hfil\;
		\parfillskip=0pt\finalhyphendemerits=0\endgraf}\fi}
\newenvironment{proof}{\noindent \textbf{{Proof.~} }}{\qed}
\def\Dbar{\leavevmode\lower.6ex\hbox to 0pt
	{\hskip-.23ex\accent"16\hss}D}
\def\url@leostyle{%
	\@ifundefined{selectfont}{\def\UrlFont{\sf}}{\def\UrlFont{\small\ttfamily}}}
\def\bcj{\begin{conjecture}}
	\def\ecj{\end{conjecture}}
\def\bcr{\begin{corollary}}
	\def\ecr{\end{corollary}}
\def\bd{\begin{definition}}
	\def\ed{\end{definition}}
\def\bea{\begin{eqnarray}}
\def\eea{\end{eqnarray}}
\def\bem{\begin{enumerate}}
	\def\eem{\end{enumerate}}
\def\bex{\begin{example}}
	\def\eex{\end{example}}
\def\bim{\begin{itemize}}
	\def\eim{\end{itemize}}
\def\bl{\begin{lemma}}
	\def\el{\end{lemma}}
\def\bma{\begin{bmatrix}}
	\def\ema{\end{bmatrix}}
\def\bpf{\begin{proof}}
	\def\epf{\end{proof}}
\def\bpp{\begin{proposition}}
	\def\epp{\end{proposition}}
\def\bqu{\begin{question}}
	\def\equ{\end{question}}
\def\br{\begin{remark}}
	\def\er{\end{remark}}
\def\bt{\begin{theorem}}
	\def\et{\end{theorem}}
\def\btb{\begin{tabular}}
	\def\etb{\end{tabular}}
\newcommand{\nc}{\newcommand}
\def\a{\alpha}
\def\b{\beta}
\def\r{\rho}
\def\s{\sigma}
\def\o{\omega}
\nc{\bbA}{\mathbb{A}} \nc{\bbB}{\mathbb{B}} \nc{\bbC}{\mathbb{C}}
\nc{\bbD}{\mathbb{D}} \nc{\bbE}{\mathbb{E}} \nc{\bbF}{\mathbb{F}}
\nc{\bbG}{\mathbb{G}} \nc{\bbH}{\mathbb{H}} \nc{\bbI}{\mathbb{I}}
\nc{\bbJ}{\mathbb{J}} \nc{\bbK}{\mathbb{K}} \nc{\bbL}{\mathbb{L}}
\nc{\bbM}{\mathbb{M}} \nc{\bbN}{\mathbb{N}} \nc{\bbO}{\mathbb{O}}
\nc{\bbP}{\mathbb{P}} \nc{\bbQ}{\mathbb{Q}} \nc{\bbR}{\mathbb{R}}
\nc{\bbS}{\mathbb{S}} \nc{\bbT}{\mathbb{T}} \nc{\bbU}{\mathbb{U}}
\nc{\bbV}{\mathbb{V}} \nc{\bbW}{\mathbb{W}} \nc{\bbX}{\mathbb{X}}
\nc{\bbZ}{\mathbb{Z}}
\nc{\bA}{{\bf A}} \nc{\bB}{{\bf B}} \nc{\bC}{{\bf C}}
\nc{\bD}{{\bf D}} \nc{\bE}{{\bf E}} \nc{\bF}{{\bf F}}
\nc{\bG}{{\bf G}} \nc{\bH}{{\bf H}} \nc{\bI}{{\bf I}}
\nc{\bJ}{{\bf J}} \nc{\bK}{{\bf K}} \nc{\bL}{{\bf L}}
\nc{\bM}{{\bf M}} \nc{\bN}{{\bf N}} \nc{\bO}{{\bf O}}
\nc{\bP}{{\bf P}} \nc{\bQ}{{\bf Q}} \nc{\bR}{{\bf R}}
\nc{\bS}{{\bf S}} \nc{\bT}{{\bf T}} \nc{\bU}{{\bf U}}
\nc{\bV}{{\bf V}} \nc{\bW}{{\bf W}} \nc{\bX}{{\bf X}}
\nc{\bZ}{{\bf Z}}
\nc{\cA}{{\cal A}} \nc{\cB}{{\cal B}} \nc{\cC}{{\cal C}}
\nc{\cD}{{\cal D}} \nc{\cE}{{\cal E}} \nc{\cF}{{\cal F}}
\nc{\cG}{{\cal G}} \nc{\cH}{{\cal H}} \nc{\cI}{{\cal I}}
\nc{\cJ}{{\cal J}} \nc{\cK}{{\cal K}} \nc{\cL}{{\cal L}}
\nc{\cM}{{\cal M}} \nc{\cN}{{\cal N}} \nc{\cO}{{\cal O}}
\nc{\cP}{{\cal P}} \nc{\cQ}{{\cal Q}} \nc{\cR}{{\cal R}}
\nc{\cS}{{\cal S}} \nc{\cT}{{\cal T}} \nc{\cU}{{\cal U}}
\nc{\cV}{{\cal V}} \nc{\cW}{{\cal W}} \nc{\cX}{{\cal X}}
\nc{\cZ}{{\cal Z}}
\nc{\hA}{{\hat{A}}} \nc{\hB}{{\hat{B}}} \nc{\hC}{{\hat{C}}}
\nc{\hD}{{\hat{D}}} \nc{\hE}{{\hat{E}}} \nc{\hF}{{\hat{F}}}
\nc{\hG}{{\hat{G}}} \nc{\hH}{{\hat{H}}} \nc{\hI}{{\hat{I}}}
\nc{\hJ}{{\hat{J}}} \nc{\hK}{{\hat{K}}} \nc{\hL}{{\hat{L}}}
\nc{\hM}{{\hat{M}}} \nc{\hN}{{\hat{N}}} \nc{\hO}{{\hat{O}}}
\nc{\hP}{{\hat{P}}} \nc{\hR}{{\hat{R}}} \nc{\hS}{{\hat{S}}}
\nc{\hT}{{\hat{T}}} \nc{\hU}{{\hat{U}}} \nc{\hV}{{\hat{V}}}
\nc{\hW}{{\hat{W}}} \nc{\hX}{{\hat{X}}} \nc{\hZ}{{\hat{Z}}}
\nc{\hn}{{\hat{n}}}
\def\diag{\mathop{\rm diag}}
\def\max{\mathop{\rm max}}
\def\min{\mathop{\rm min}}
\def\tr{\mathop{\rm Tr}}
\newcommand{\bra}[1]{\langle#1|}
\newcommand{\ket}[1]{|#1\rangle}
\newcommand{\norm}[1]{\lVert#1\rVert}
\def\Dbar{\leavevmode\lower.6ex\hbox to 0pt
	{\hskip-.23ex\accent"16\hss}D}
\begin{document}
	\title{A family of separability criteria and lower bounds of concurrence}
	
	\author{Xian Shi}\email[]
	{shixian01@gmail.com}
	\affiliation{College of Information Science and Technology,
		Beijing University of Chemical Technology, Beijing 100029, China}
	
		\author{Yashuai Sun}
	\affiliation{College of Information Science and Technology,
		Beijing University of Chemical Technology, Beijing 100029, China}

	%
	
	
	
	\date{\today}
	
	\pacs{03.65.Ud, 03.67.Mn}
	
	\begin{abstract}
	The problem on detecting the entanglement of a bipartite state is  significant in quantum information theory. In this article, we apply the Ky Fan norm to the revised realignment matrix of a bipartite state. Specifially, we consider a family of separable criteria for bipartite states, and present when the density matrix corresponds to a state is real, the criteria is equivalent to the enhanced realignment criterion. Moreover, we present analytical lower bounds of concurrence and the convex-roof extended negativity for arbitrary dimensional systems.
 	\end{abstract}
 \maketitle
	\section{Introduction}
\indent	Entanglement is one of the essential features in quantum mechanics when comparing with the classical physics \cite{horodecki2009quantum,plenio2014introduction}. It plays key roles in quantum information processing, such as, quantum cryptography \cite{ekert1991quantum}, teleportation \cite{bennett1993teleporting} and superdense coding \cite{bennett1992communication}. \par
 One of the most important problems in quantum information theory is how to distinguish separable and entangled states. If a quantum state $\r_{AB}$ of a bipartite system can be written as a convex combination of product states,
 \begin{align*}
 \r_{AB}=\sum_i p_i\rho_A^i\otimes\r_B^i,
 \end{align*}here $\{p_i\}$ is a probability distribution, $\rho_A^i$ and $\rho_B^i$ are states of subsystems $A$ and $B$, respectively, then it is separable, otherwise, it is entangled. The above problem is completely solved for $2\otimes2$ and $2\otimes3$ systems by the Peres-Horodecki criterion: a bipartite state $\r_{AB}$ is separable if and only if it is positive partial transpose (PPT), $i.$ $e.$, $(id\otimes T)(\rho_{AB})\ge 0$ \cite{peres1996separability}. However, the problem is NP-hard for arbitrary dimensional systems \cite{gurvits2003classical}.  In the past twenty years, there are several other prominent criterions. The computable cross norm or realignment criterion (CCNR) criterion is proposed by Rudolph \cite{rudolph2005further} and Chen and Wu \cite{chen2003matrix}. In 2006, the authors proposed the local uncertainty relations (LURs) and showed that the LURs is stronger than the CCNR criterion \cite{guhne2006entanglement}. In 2007, the author proposed a criterion which is based on Bloch representations \cite{de}. Then Zhang $et$ $al.$ presented the enhanced realignment criterion \cite{zhang2008entanglement}. In 2015, the authors proposed an impoved CCNR criterion where they showed it is stronger than the CCNR criterion \cite{shen2015separability}. In 2018, Shang $et$ $al.$ presented a sufficient condition for the separability of a bipartite state, which is called ESIC criterion \cite{shang2018enhanced}. Recently, Sarbicki $et$ $al.$ proposed a family of separability criteria which are based on the bloch representation of a state \cite{sarbicki2020family}. Subsequently, they showed that the detection power of the criteria is equivalent to the enhanced realignment criterion \cite{sarbicki2020enhanced}.\par 
 To quantify the entanglement is the other important problem in quantum entanglement theory, various entanglement measures are proposed in the past years \cite{wootters1998entanglement,vedral1998entanglement,vidal1999robustness,terhal2000schmidt,vidal2000entanglement,wei2003geometric,christandl2004squashed}. Concurrence \cite{wootters1998entanglement} and the convex-roof extended negativity (CREN) \cite{lee2003convex} are two of the most used among the measures under the convex roof extended method. However, the two measures are difficult to compute for higher dimensional systems.  Lots of results have been made on the lower bounds of concurrence and CREN \cite{chen2005concurrence,brandao2005quantifying,de2007lower,Chen2012,li2020improved}.  In \cite{de2007lower}, de Vicente presented analytical lower bounds of concurrence in terms of LUR and correlation matrix separability criteria \cite{de2007lower}. Recently, Li $et$ $al.$ improved the lower bounds of concrrence and CREN based on Bloch representations \cite{li2020improved}.\par
 In this work, we propose a class of separablity criteria based on the Ky Fan norm of the realignment matrix of a state. Moreover, we show when a bipartite state corresponds to a real matrix, the criterion is as strong as the enhanced realignment criterion. At last, we present lower bounds for both concurrence and CREN under the methods here.\par 

\section{Separability Crierion for Bipartite States}
\indent In this section, we first introduce some knowledge needed.  Assume $A=[a_{ij}]\in M_{m\times n}(\mathbb{C})$ is a matrix, $vec(A)$ is defined as
\begin{align}
vec(A)=(a_{11},a_{21},\cdots,a_{m1},a_{12},\cdots,a_{m2},\cdots,a_{mn})^{T},
\end{align}
here $T$ means transposition. Let $Z$ be an $m\times m$ block matrix with block size $n\times n$. Then the realignment operator $\mathcal{R}$ changes $Z$ into a new matrix with size $m^2\times n^2,$
$$
\mathcal{R}(Z)\equiv \begin{pmatrix}
vec(Z_{1,1})^T\\\cdots\\vec(Z_{m,1})^T\\\cdots\\vec(Z_{1,m})^T\\\cdots\\vec(Z_{m,m})^T
\end{pmatrix}.
$$The CCNR criterion \cite{chen2003matrix,rudolph2005further} shows that any separable state $\rho_{AB}$ satisfies 
\begin{align}
||\mathcal{R}(\rho)||_1\le 1. \label{ccnr}
\end{align}
Then based on the realignment of $\rho_{AB}-\rho_A\otimes\rho_B$, Zhang $et$ $al.$ showed that for any separable state $\rho_{AB}$, the following inequality is valid,
\begin{align}
\norm{\mathcal{R}(\rho_{AB}-\rho_A\otimes\rho_B)}_1\le \sqrt{1-tr\rho_A^2}\sqrt{1-\tr\rho_B^2},\label{zh}
\end{align}
it is stronger than the CCNR criterion (\ref{ccnr}) \cite{zhang2008entanglement}.\par
By using some parameters and the reduced density matrices of a bipartite state $\rho_{AB},$ Shen $et$ $al.$ \cite{shen2015separability} constructed
\begin{align}
\mathcal{N}_{\beta,l}^{G}(\rho)=\begin{pmatrix}
G&\beta \omega_l(\rho_B)^T\\
\beta\omega_l(\rho_A)&\mathcal{R}(\rho)
\end{pmatrix},
\end{align}
here $G-\b^2 E_{l\times l}$ is positive semidefinte, $\b\in \mathbb{R}$, and $\o_l(X)$ means 
\begin{align}
\o_l(X)=(\underbrace{vec(X),\cdots, vec(X)}_{\textit{$l$ columns}}).
\end{align} 
There they showed that when $G-\b^2 E_{l\times l}\ge 0$, then a separable state $\r$ satisfies 
\begin{align}
\norm{\mathcal{N}_{\beta,l}^{G}(\rho)}_1\le 1+tr(G).\label{s}
\end{align}
\indent In this manuscript, we denote $\mathcal{M}_{\alpha,\beta}(\rho_{AB})$ on a bipartite state $\r_{AB}$ as
\begin{align}
\mathcal{M}_{\alpha,\beta}(\rho_{AB})=\begin{pmatrix}
	\alpha\beta& \alpha vec(\rho_B)^T\\
	\b vec(\rho_A)& \mathcal{R}(\rho_{AB})
\end{pmatrix},
\end{align}
here $\a,\b\in \mathbb{R}$, $\rho_A$ and $\rho_B$ are redeced density matrices of the $A$ and $B$ system, respectively.  As Ky Fan norm is common used in matrix analysis \cite{bhatia2013matrix}, we use Ky Fan norm in this manuscript. The Ky Fan norm of a matrix $A_{m\times n}$ is defined as the sum of all singular values $d_i$, that is, 
 \begin{align*}
 \norm{A}_{KF}=\sum_i^{\min(n,m)}d_i= tr\sqrt{A^{\dagger} A}.
 \end{align*} \par
 \begin{Theorem}\label{t1}
 	Assume $\rho_{AB}$ is a separable state, when $\a,\b\in\mathbb{R},$ 
 	\begin{align*}
 	\norm{\mathcal{M}_{\a,\b}(\rho_{AB})}_{KF}\le \sqrt{(\a^2+1)(\b^2+1)}
 	\end{align*}
 \end{Theorem}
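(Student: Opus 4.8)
The plan is to reduce the statement to the case of a pure product state by convexity, and then exploit the fact that the realignment of a product state is a rank-one matrix.

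First I would note that the assignment $\r_{AB}\mapsto \mathcal{M}_{\a,\b}(\r_{AB})$ is linear in $\r_{AB}$, since $\mathcal{R}$, the vectorization $vec$, and the partial traces producing $\r_A$ and $\r_B$ are all linear. Writing a separable state as a convex combination of pure product states, $\r_{AB}=\sum_i p_i\proj{a_i}\ox\proj{b_i}$ (any separable state admits such a decomposition, since each product state splits into pure ones), the triangle inequality for the Ky Fan norm gives
\[
\norm{\mathcal{M}_{\a,\b}(\r_{AB})}_{KF}\le \sum_i p_i\,\norm{\mathcal{M}_{\a,\b}(\proj{a_i}\ox\proj{b_i})}_{KF},
\]
so it suffices to prove $\norm{\mathcal{M}_{\a,\b}(\r)}_{KF}\le\sqrt{(\a^2+1)(\b^2+1)}$ for a single normalized pure product state $\r=\proj{a}\ox\proj{b}$, for which $\r_A=\proj{a}$ and $\r_B=\proj{b}$.

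Second, I would compute $\mathcal{R}(\r)$ directly. Viewing $\r$ as an $m\times m$ block matrix, its $(i,j)$ block equals $a_i\bar a_j\proj{b}$, so the corresponding row of $\mathcal{R}(\r)$ is $a_i\bar a_j\,vec(\proj b)^T$; assembling the rows in the order dictated by $\mathcal{R}$ yields the rank-one identity $\mathcal{R}(\r)=vec(\r_A)\,vec(\r_B)^T$. Substituting this into the definition of $\mathcal{M}_{\a,\b}$ and factoring out a column and a row, I obtain
\[
\mathcal{M}_{\a,\b}(\r)=\begin{pmatrix}\a\\ vec(\r_A)\end{pmatrix}\begin{pmatrix}\b & vec(\r_B)^T\end{pmatrix},
\]
which is again a rank-one matrix.

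Finally, the Ky Fan norm of a rank-one matrix $uv^T$ is its single nonzero singular value $\norm{u}_2\norm{v}_2$ (transposition versus conjugate transposition is irrelevant, since $\norm{\bar v}_2=\norm{v}_2$). Here $\norm{u}_2^2=\a^2+\norm{vec(\r_A)}_2^2=\a^2+\tr(\r_A^2)=\a^2+1$ because $\r_A$ is pure, and similarly the row factor has squared norm $\b^2+1$. Hence $\norm{\mathcal{M}_{\a,\b}(\r)}_{KF}=\sqrt{(\a^2+1)(\b^2+1)}$ for pure product states; combined with the first step this proves the theorem, and also shows the bound is attained, hence tight. I do not expect a genuine obstacle: the only points requiring care are tracking the index ordering of $vec$ and $\mathcal{R}$ to confirm $\mathcal{R}(\proj a\ox\proj b)=vec(\r_A)vec(\r_B)^T$, and the elementary fact that a rank-one matrix has Ky Fan norm equal to the product of the Euclidean norms of its two factors.
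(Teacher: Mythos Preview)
Your argument is correct and follows essentially the same route as the paper: decompose the separable state into pure product states, use linearity and the triangle inequality for $\norm{\cdot}_{KF}$, factor each $\mathcal{M}_{\a,\b}(\proj{a_i}\ox\proj{b_i})$ as the rank-one outer product $\begin{pmatrix}\a\\ vec(\r_A^i)\end{pmatrix}\begin{pmatrix}\b & vec(\r_B^i)^T\end{pmatrix}$, and evaluate its single singular value using $\tr(\r_A^i)^2=\tr(\r_B^i)^2=1$. Your added remark that the bound is attained on pure product states is a small bonus not stated in the paper.
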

\begin{proof}
	As $\rho_{AB}$ is separable, it can be written as 
	\begin{align}
	\rho_{AB}=\sum_i p_i\rho_A^i\otimes\rho_B^i,\label{t}
	\end{align}
	here $p_i\in[0,1],$ $\sum_ip_i=1$, $\rho_A^i$ and $\rho_B^i$ are pure states of $A$ and $B$ systems, respectively. Next 
	\begin{align*}
	\norm{\mathcal{M}_{\a,\b}(\rho_{AB})}_{KF}=&\norm{\sum_ip_i\begin{pmatrix}
	\a\b&\a vec(\rho^i_B)^T\\
	\b vec(\rho^i_A)& vec(\rho^i_A) vec(\rho_B^i)^T
	\end{pmatrix}}_{KF}\nonumber\\
	=&\norm{\sum_ip_i\begin{pmatrix}
		\a\\
		vec(\rho^i_A)
		\end{pmatrix}\begin{pmatrix}
		\b& vec(\rho^i_B)^T
		\end{pmatrix}}_{KF}\nonumber\\
	\le&\sum_i p_i\norm{\begin{pmatrix}
		\a\\
		vec(\rho^i_A)
		\end{pmatrix}}\norm{\begin{pmatrix}
		\b\\ vec(\rho^i_B)
		\end{pmatrix}}\\\le&
	\sum_i p_i\sqrt{(\alpha^2+1)(\b^2+1)}\\=&\sqrt{(\alpha^2+1)(\b^2+1)}.
	\end{align*}
Here we use $\rho_A^i$ and $\rho_B^i$ are pure states.	Hence, we finish the proof.
\end{proof}\par
Obviously, when $\a=\b=0,$ the formula $(\ref{t})$ is the CCNR criterion. Hence the criterion is stronger than the CCNR criterion.\par
Next we present an example of a $3\otimes3$ state which cannot be detected by CCNR criterion. 
\begin{example}
\emph{	Here we consider the separability on the mixture of the chessboard state \cite{bruss2000construction} with white noise. \\
	The chessboard states is defined as follows,}
\begin{align*}
\rho=&\frac{1}{N}\sum_i\ket{V_i}\bra{V_i},\nonumber\\
\ket{V_1}=&\ket{m,0,s;0,n,0;0,0,0},\\\ket{V_2}=&\ket{0,a,0;b,0,c;0,0,0}\\
\ket{V_3}=&\ket{n^{*},0,0;0,-m^{*},0,t,0,0},\\\ket{V_4}=&\ket{0,b^{*},0;-a^{*},0,0;0,d,0}.
\end{align*}\emph{
Here $N$ is a normalized factor. To make this class of states be PPT, here we assume $m,s,n,a,b,c,t$ and $d$ are real parameters, $s=\frac{ac}{n}$, $t=\frac{ad}{m}$. The mixture of the chessboard state with the white noise is $$\r_p=p\r+(1-p)\frac{I_3\otimes I_3}{9}.$$
Next we randomly choose \begin{align*}
a=&0.33;\hspace{3mm}
b=-0.109;\hspace{3mm}
c=-0.65,\hspace{2mm}p=0.9;\\
m=&0.469;\hspace{3mm}
n=-0.3161;\hspace{3mm}
d=0.8560.
\end{align*}
Through computation, we have $\r_{p}$ is PPT, and this state cannot be detected by CCNR criterion\cite{chen2003matrix,rudolph2005further}. Next when we take $\a=250$ and $\b=240,$ $\norm{\mathcal{M}_{\a,\b}(\rho_{AB})}_{KF}- \sqrt{(\a^2+1)(\b^2+1)}=0.0027>0,$ due to the Theorem \ref{t1}, $\r_{p}$ is entangled.}
\end{example}\par
\indent Recently, the authors showed that the criterion proposed in \cite{sarbicki2020family,sarbicki2020enhanced} is as strong as the enhanced realignment criterion (\ref{zh}), here we show that the detection power of  Theorem \ref{t1} is the same as the power of the enhanced realignment criterion when the bipartite state $\rho_{AB}$ corresponds to a real matrix. 
\begin{Theorem}
	Assume $\rho_{AB}$ corresponds to a real matrix, it satisfies the enhanced realignment criterion if and only if it satisfies Theorem \ref{t1} for all $\a,\b\in \mathbb{R}^{+}.$
\end{Theorem}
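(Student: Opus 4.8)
The plan is to prove the two implications separately, both built on the decomposition $\mathcal{M}_{\a,\b}(\rho_{AB})=X+Y$, where $X=\binom{\a}{vec(\rho_A)}\binom{\b}{vec(\rho_B)}^{T}$ is rank one and $Y$ is the matrix with $\mathcal{R}(\rho_{AB}-\rho_A\ox\rho_B)$ in its lower-right block and zeros in the first row and first column; this is immediate once one checks from the definition of $\mathcal{R}$ that $\mathcal{R}(\rho_A\ox\rho_B)=vec(\rho_A)\,vec(\rho_B)^{T}$. Since $X$ is rank one and $\rho_A,\rho_B$ are real symmetric, $\norm{X}_{KF}=\sqrt{(\a^2+\tr\rho_A^2)(\b^2+\tr\rho_B^2)}$, while $\norm{Y}_{KF}=\norm{\mathcal{R}(\rho_{AB}-\rho_A\ox\rho_B)}_1$ is exactly the left-hand side of the enhanced realignment criterion~(\ref{zh}). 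Since $\rho_{AB}$ is real I will also use the variational form $\norm{Z}_{KF}=\max\{\tr(U^{T}Z):U\text{ real},\ \norm{U}_{\mathrm{op}}\le1\}$.

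For ``enhanced realignment $\Rightarrow$ Theorem~\ref{t1} for all $\a,\b$'', the triangle inequality for $\norm{\cdot}_{KF}$ together with~(\ref{zh}) gives
\begin{align}
\norm{\mathcal{M}_{\a,\b}(\rho_{AB})}_{KF}&\le\sqrt{(\a^2+\tr\rho_A^2)(\b^2+\tr\rho_B^2)}\nonumber\\
&\quad+\sqrt{(1-\tr\rho_A^2)(1-\tr\rho_B^2)},
\end{align}
and the elementary inequality $\sqrt{ac}+\sqrt{bd}\le\sqrt{(a+b)(c+d)}$ (for $a,b,c,d\ge0$; a form of Cauchy--Schwarz) with $a=\a^2+\tr\rho_A^2$, $b=1-\tr\rho_A^2$, $c=\b^2+\tr\rho_B^2$, $d=1-\tr\rho_B^2$ (nonnegative since $\tr\rho_A^2,\tr\rho_B^2\le1$) bounds the right-hand side by $\sqrt{(\a^2+1)(\b^2+1)}$. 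This half uses neither realness nor a limiting argument.

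The converse is the substance, and the idea is to let $\a,\b\to\infty$. The key step is the identity
\begin{align}
&\lim_{\a,\b\to\infty}\Big[\,\norm{\mathcal{M}_{\a,\b}(\rho_{AB})}_{KF}-\sqrt{(\a^2+\tr\rho_A^2)(\b^2+\tr\rho_B^2)}\,\Big]\nonumber\\
&\quad=\norm{\mathcal{R}(\rho_{AB}-\rho_A\ox\rho_B)}_1.
\end{align}
The ``$\le$'' part is the triangle bound above. For ``$\ge$'' I would test $\norm{\mathcal{M}_{\a,\b}}_{KF}$ against the real contraction $U=\hat u\,\hat v^{T}+W$, where $\hat u,\hat v$ are the unit vectors along $(\a,vec(\rho_A))$ and $(\b,vec(\rho_B))$ and $W$ is a contraction with $W^{T}\hat u=0$ and $W\hat v=0$ (so that $UU^{T}=\hat u\hat u^{T}+WW^{T}$ has operator norm $\le1$): then $\tr(U^{T}X)=\norm{X}_{KF}$ exactly, and, maximizing over $W$, $\tr(U^{T}Y)=\hat u^{T}Y\hat v+\norm{(I-\hat u\hat u^{T})\,Y\,(I-\hat v\hat v^{T})}_{KF}$. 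Because $\hat u,\hat v\to e_1$ while the first row and column of $Y$ vanish, the term $\hat u^{T}Y\hat v\to0$ and the compressed Ky Fan norm tends to $\norm{Y}_{KF}$, which yields the lower bound. Granting the identity, the hypothesis $\norm{\mathcal{M}_{\a,\b}}_{KF}\le\sqrt{(\a^2+1)(\b^2+1)}$, read along $\a=\lambda\b$ with $\b\to\infty$, forces $\norm{\mathcal{R}(\rho_{AB}-\rho_A\ox\rho_B)}_1\le\frac{\lambda^2(1-\tr\rho_B^2)+(1-\tr\rho_A^2)}{2\lambda}$ for every $\lambda>0$; minimizing over $\lambda$ (AM--GM) gives exactly $\sqrt{(1-\tr\rho_A^2)(1-\tr\rho_B^2)}$, i.e.\ the enhanced realignment criterion.

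The main obstacle is the limit identity, and within it the ``$\ge$'' estimate: one has to exhibit the right family of test contractions and check that the cross terms coupling the rank-one block $X$ to the block $Y$ genuinely vanish as $\a,\b\to\infty$ --- here the structural facts $\mathcal{R}(\rho_A\ox\rho_B)=vec(\rho_A)vec(\rho_B)^{T}$, $Ye_1=0=e_1^{T}Y$, and $\hat u,\hat v\to e_1$ do the work. Everything else reduces to the triangle inequality, two uses of Cauchy--Schwarz, and a one-variable minimization.
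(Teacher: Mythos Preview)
Your proposal is correct and follows essentially the same route as the paper: the forward direction is identical (triangle inequality plus Cauchy--Schwarz), and the converse is again a variational/limit argument, testing $\norm{\mathcal{M}_{\a,\b}}_{KF}$ against contractions adapted to the rank-one part and sending $\a,\b\to\infty$, followed by an AM--GM/Cauchy--Schwarz optimization over the ratio $\a/\b$. Your packaging via the clean decomposition $\mathcal{M}_{\a,\b}=X+Y$ and the test family $U=\hat u\hat v^{T}+W$ is tidier than the paper's explicit parametrization $P=P(r,\theta,\eta,u,O)$ with subsequent optimization over $u$, $\eta$, $\theta$, $O$, but the underlying mechanism and the place where realness is used (achieving the Ky Fan norm with real test matrices) are the same.
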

\begin{proof}
	$\Longrightarrow$:\par As
	\begin{align}
	\mathcal{M}_{\a,\b}(\rho_{AB})=\mathcal{M}_{\a,\b}(\r_{A}\otimes\r_B)+\mathcal{L}_{\a,\b}(\r_{AB}),
	\end{align}
	here 
	\begin{align*}
	\mathcal{L}(\r_{AB})=\begin{pmatrix}
	0&0\\
	0&\mathcal{R}(\rho_{AB}-\r_A\otimes\r_B)
	\end{pmatrix},
	\end{align*}
	then if $\r_{AB}$ satisfies the enhanced realignment criterion, then
	\begin{align}
	&\norm{\mathcal{M}_{\a,\b}(\rho_{AB})}_{KF}\nonumber\\\le& \norm{\mathcal{M}_{\a,\b}(\rho_{A}\otimes\r_B)}_{KF}+\norm{\mathcal{R}(\rho_{AB}-\r_A\otimes\r_B)}_{KF}\nonumber\\
	\le& \sqrt{(\a^2+tr\rho_{A}^2)(\b^2+tr\rho_B^2)}+\sqrt{(1-tr\rho_A^2)(1-tr\rho_B^2)}\nonumber\\
	\le&\sqrt{(1+\a^2)(1+\b^2)}.
	\end{align}
	Here the last inequality is due to the Cauchy-Schwarz inequality. 
\par
Next we prove the converse $\Longleftarrow$:\par
First let us recall that when $A\in M_{m\times n}(\mathbb{R})$,
\begin{align}
\norm{A_{m\times n}}_{KF}=\max_{P\in \mathcal{U}_{m\times n}(\mathbb{R})} |TrA^{\dagger}P|,
\end{align}
where the maximum takes over all the unitary matrices $P\in\mathcal{U}_{m\times n}(\mathbb{R}) $. Assume $\r_{AB}$ is a separable state, according to the Theorem \ref{t1}, we have for any given $\a\ge0,\b\ge0$,
\begin{align*}
\norm{\mathcal{M}_{\a,\b}(\rho_{AB})}_{KF}\le \sqrt{(\a^2+1)(\b^2+1)},
\end{align*}
then let \begin{align*}
&f(\a,\b,P)\nonumber\\
=&\sqrt{(\a^2+1)(\b^2+1)}+\min_{P\in\mathcal{U}_{d_A^2\times d_B^2}}trP^{\dagger}\mathcal{M}_{\a\times\b}(\rho_{AB}),
\end{align*}we have
\begin{align}
&f(\a,\b,P)\nonumber\\=&\sqrt{(\a^2+1)(\b^2+1)}+\min_{P\in\mathcal{U}_{d_A^2\times d_B^2}}trP^{\dagger}\mathcal{M}_{\a\times\b}(\rho_{AB})\nonumber\\
=&\sqrt{(\a^2+1)(\b^2+1)}-\max_{P\in\mathcal{U}_{d_A^2\times d_B^2}}trP^{\dagger}\mathcal{M}_{\a\times \b}(\rho_{AB})\nonumber\\
=&\sqrt{(\a^2+1)(\b^2+1)}-\norm{\mathcal{M}_{\a,\b}(\r_{AB})}_{KF}
\ge0\label{t2},
\end{align}
here $P$ takes over all the unitary matrices.
\par As $\a,\b\in \mathbb{R},$ $\a,\b$ can be written as
\begin{align}
\a=r\cos\theta,\hspace{3mm}\b=r\sin\theta,
\end{align}
here $r\in\mathbb{R}^{+}$ and $\theta\in (0,\pi/2).$  
Next assume $P$ takes the following unitary matrix,
\begin{align*}
P=\begin{pmatrix}
-\sqrt{1-\frac{\eta^2}{r^2}}&\frac{\eta}{r}u^{T}\\
\frac{\eta}{r}v&\sqrt{1-\frac{\eta^2}{r^2}}O
\end{pmatrix},
\end{align*}
here $r$ tends to the infinity, $\eta$ is positive and infinitesimal with respect to r, $O$ is a $d_A^2\times d_B^2$ matrix. As $P$ is unitary, $\norm{u}=\norm{v}=1$,  $v=Ou,$ and $OO^{T}$ and $O^{T}O$ are projectors. Then the left hand side of $(\ref{t2})$ can be written as
\begin{widetext}
	\begin{align}
&f(\a,\b,P)\nonumber\\=	&\sqrt{(1+r^2\cos^2\theta)(1+r^2\sin^2\theta)}+\eta(\cos\theta vec(\rho_B)^{T}u
	+\sin\theta v^Tvec(\r_A))+\sqrt{1-\frac{\eta^2}{r^2}}(trR(\r_{AB})O^{T}-r^2\cos\theta\sin\theta)\nonumber\\
	=&\sqrt{(1+r^2\cos^2\theta)(1+r^2\sin^2\theta)}+trR(\r_{AB})O^{T}-r^2\cos\theta\sin\theta+\frac{\eta^2}{2}\cos\theta\sin\theta+\eta(\sin\theta vec(\rho_A)^{T}O+\cos\theta vec(\rho_B)^T)u+o(1),\nonumber\\
	\ge&\sqrt{(1+r^2\cos^2\theta)(1+r^2\sin^2\theta)}-r^2\cos\theta\sin\theta+\frac{\eta^2}{2}\cos\theta\sin\theta-\eta|\sin\theta vec(\rho_A)^{T}O+\cos\theta vec(\rho_B)^T|+trR(\r_{AB})O^{T}\nonumber\\
	=&\sqrt{(1+r^2\cos^2\theta)(1+r^2\sin^2\theta)}-r^2\cos\theta\sin\theta+trR(\r_{AB})O^{T}+\frac{\cos\theta\sin\theta}{2}[(\eta-m)^2-m^2],\label{t4}
	\end{align}
\end{widetext}
the second equality is due to that $\eta$ is infinitesimal relative to $r$. 
In the first inequality, when $u$ is antiparallel to -$(\sin\theta vec(\rho_A)^{T}O+\cos\theta vec(\rho_B)^T)^{T}$, the equality is valid. In the third equality, $m=\frac{|\sin\theta vec(\rho_A)^{T}O+\cos\theta vec(\rho_B)^T|}{\cos\theta\sin\theta}.$ When $\eta=m$, $(\ref{t4})$ gets the minimum.  Then
\begin{widetext}
	\begin{align*}
	&f(\a,\b,P)\\\ge&\sqrt{(1+r^2\cos^2\theta)(1+r^2\sin^2\theta)}-r^2\cos\theta\sin\theta+trR(\r_{AB})O^{T}-\frac{|\sin\theta vec(\rho_A)^{T}O+\cos\theta vec(\rho_B)^T|^2}{2\cos\theta\sin\theta}
	\\=&\frac{1}{2\sin\theta\cos\theta}-\frac{\sin^2\theta tr\rho_A^2+\cos^2\theta tr\rho_B^2+\cos\theta\sin\theta(vec(\rho_A)^{T}Ovec(\r_B)+vec(\rho_B)^TO^{T}vec(\rho_{A}))}{2\cos\theta\sin\theta}+trR(\rho_{AB})O^{T}\nonumber\\
	=&\frac{1}{2\sin\theta\cos\theta}-\frac{\tan\theta tr\rho_A^2}{2}-\frac{\cot\theta tr\rho_B^2}{2}+trR(\rho_{AB})O^{T}-\frac{vec(\rho_A)^{T}Ovec(\r_B)+vec(\rho_B)^TO^{T}vec(\rho_{A})}{2}\\
	\ge& \sqrt{(1-tr\rho_{A}^2)(1-tr\rho_B^2)}+trR(\rho_{AB})O^{T}-{vec(\rho_B)^TO^{T}vec(\rho_{A})},\nonumber\\
	=&\sqrt{(1-tr\rho_{A}^2)(1-tr\rho_B^2)}+tr(R(\rho_{AB}-\r_A\otimes\r_B))O^{T}.
	\end{align*}
\end{widetext}
Next as 

	\begin{align*}
	&\min_Otr(R(\rho_{AB}-\r_A\otimes\r_B))O^{T}\\=&-\max_Otr(R(\rho_{AB}-\r_A\otimes\r_B))O^{T}\\=&-\norm{(R(\rho_{AB}-\r_A\otimes\r_B))}_{KF}.
	\end{align*} 
	
\indent Hence there exists a $P\in M(\mathbb{R})$ such that $f(\a,\b,P)=\sqrt{(1-tr\rho_{A}^2)(1-tr\rho_B^2)}-\norm{R(\rho_{AB}-\r_A\otimes\r_B)}_{KF},$ that is, the dection power of the enhanced realignment criterion and Theorem \ref{t1} is equivalent for the bipartite states corresponds to a real matrix.\end{proof}\par
\section{Lower Bounds of Concurrence and CREN}
\indent In the bipartite entanglement theory, entanglement monotones are useful to distinguish the entanglement states and separable states. However, it is very hard to compute almost all entanglement monotones for arbitrary dimensional systems. In the last section, we first recall two popular entanglement monotones, concurrence and CREN, for mixed states of arbitrary bipartite systems, then we present a family of lower bounds of the two entanglement monotones. At last, by means of an example, our results can be used as separable criteria and can present better bounds than the results in \cite{de2007lower,li2020improved}. 
\par
Assume $\ket{\psi}_{AB}$ is a pure state in $\mathcal{H}_{AB}$, its concurrence is defined as
\begin{align*}
C(\ket{\psi}_{AB})=\sqrt{2(1-tr\rho_A^2)},
\end{align*}
here $\r_A=tr_B\ket{\psi}_{AB}\bra{\psi}.$ The concurrence for a mixed state $\r_{AB}$ is defined as
\begin{align*}
C(\rho_{AB})=\min_{\{p_i,\ket{\psi_i}\}}\sum_i p_iC(\ket{\psi_i}_{AB}),
\end{align*}
where the minimum takes over all the decompositions of $\r_{AB}=\sum_i p_i\ket{\psi_i}_{AB}\bra{\psi_i},$ $p_i\ge 0$ and $\sum_i p_i=1$. Next for a pure state $\ket{\psi}_{AB}$, its CREN \cite{lee2003convex} is defined as
\begin{align*}
\mathcal{N}(\ket{\psi})=\frac{\norm{(\ket{\psi}\bra{\psi})^{T_B}}-1}{k-1},
\end{align*}
here $k=\min(dim(\mathcal{H}_A),dim(\mathcal{H}_B))$, $(\ket{\psi}\bra{\psi})^{T_B}$ denotes the partial transpose of $\ket{\psi}\bra{\psi}.$ For a mixed state $\r_{AB},$ its CREN is defined as
\begin{align*}
C(\rho_{AB})=\min_{\{p_i,\ket{\psi_i}\}}\sum_i p_iC(\ket{\psi_i}_{AB}),
\end{align*}
where the minimum takes over all the decompositions of $\r_{AB}=\sum_i p_i\ket{\psi_i}_{AB}\bra{\psi_i},$ $p_i\ge 0$ and $\sum_i p_i=1$.\par 
Before presenting our results, we consider $\mathcal{M}_{\a,\b}(\ket{\psi})$ when $\ket{\psi}=\sum_{i=0}^{k-1} \sqrt{\lambda_i}\ket{ii}$ is a pure state, 
\begin{align*}
&\mathcal{M}_{\a,\b}(\ket{\psi}_{AB}\bra{\psi})=M_1+M_2,\\
&M_1=\begin{pmatrix}
\a\b& \a vec(\rho_B)^T\\
\b vec(\rho_A)&\Lambda_1
\end{pmatrix},M_2=\begin{pmatrix}
0& 0\\
0&\Lambda_2
\end{pmatrix},
\\
&vec(\rho_A)=(
\lambda_0,\underbrace{0,\cdots,0}_{k-1},0,\lambda_1,\underbrace{0,\cdots,0}_{k-1},\cdots,\lambda_{k-1}
)^T,\\
&vec(\rho_B)=(
\lambda_0,\underbrace{0,\cdots,0}_{k-1},0,\lambda_1,\underbrace{0,\cdots,0}_{k-1},\cdots,\lambda_{k-1}
)^T,\\
&\Lambda_1=\diag(\lambda_0,0,\cdots,0,0,\lambda_1,\cdots,0,\cdots,0,\cdots,\lambda_{k-1})\\
&\Lambda_2=\diag(0,\sqrt{\lambda_0\lambda_1},\cdots,\sqrt{\lambda_0\lambda_{k-1}},\\&\sqrt{\lambda_1\lambda_0},0,\cdots,\sqrt{\lambda_1\lambda_{k-1}},\cdots,\sqrt{\lambda_{k-1}\lambda_0},\cdots,0),
\end{align*}
As $trM_1M_2^{\dagger}=0,$ we have 
\begin{align}
\norm{\mathcal{M}_{\a,\b}(\ket{\psi}_{AB}\bra{\psi})}_{KF}=&\norm{M_1}_{KF}+\norm{M_2}_{KF}\nonumber\\
=&\norm{M_1}_{KF}+2\sum_{i<j}\sqrt{\lambda_i\lambda_j},
\end{align} 
In the following, we denote $\norm{\cdot}$ as $\norm{\cdot}_{KF}$. Next due to the definition of $\mathcal{M}_{\a,\b}(\cdot),$ $M_1=\mathcal{M}_{\a,\b}(\s_{AB}),$ $\s=\sum_i\lambda_i\ket{ii}\bra{ii}$ is a separable state. According to Theorem \ref{t1}, 
\begin{align}
\norm{\mathcal{M}_{\a,\b}(\ket{\psi}_{AB}\bra{\psi})}\le \sqrt{(1+\a^2)(1+\b^2)}+2\sum_{i<j}\sqrt{\lambda_i\lambda_j}.\label{t5}
\end{align}
Next in \cite{chen2005concurrence}, the authors showed that 
\begin{align*}
C^2(\ket{\psi})\ge\frac{8}{k(k-1)}(\sum_{i<j}\sqrt{\lambda_i\lambda_j})^2,
\end{align*}
then combining $(\ref{t5}),$ we have 
\begin{align*}
&C(\ket{\psi})\nonumber\\
\ge& \frac{\sqrt{2}}{\sqrt{k(k-1)}}(\norm{\mathcal{M}_{\a,\b}(\ket{\psi}_{AB}\bra{\psi})}-\sqrt{(1+\a^2)(1+\b^2)}).
\end{align*}
\par 
For a mixed state $\r_{AB},$ assume $\{p_i,\ket{\psi_i}\}$ is the optimal decomposition for $\rho_{AB}$ such that $C(\rho_{AB})=\sum_ip_iC(\ket{\psi_i}_{AB}),$ then
\begin{align}
C(\rho_{AB})=&\sum_ip_iC(\ket{\psi_i}_{AB})\nonumber\\
\ge&\frac{\sqrt{2}}{\sqrt{k(k-1)}}\sum_i p_i(\norm{\mathcal{M}_{\a,\b}(\ket{\psi_i}_{AB}\bra{\psi})}-\sqrt{(1+\a^2)(1+\b^2)})\nonumber\\
\ge&\frac{\sqrt{2}}{\sqrt{k(k-1)}}(\norm{\mathcal{M}_{\a,\b}(\rho_{AB})}-\sqrt{(1+\a^2)(1+\b^2)}). \label{c}
\end{align}
Then we have the following theorem,
\begin{Theorem}\label{t6}
Assume $\r_{AB}$ is a mixed state, for any $\a,\b\in\mathbb{R}^{+}$, we have
\begin{align*}
C(\rho_{AB})\ge \frac{\sqrt{2}}{\sqrt{k(k-1)}}(\norm{\mathcal{M}_{\a,\b}(\rho_{AB})}-\sqrt{(1+\a^2)(1+\b^2)}).
\end{align*}
\end{Theorem}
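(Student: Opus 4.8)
The plan is to obtain the mixed-state bound from the pure-state estimate that is essentially assembled in (\ref{t5})--(\ref{c}); the argument has three pieces, the first and last being short.

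\emph{Reduction to pure states.} Fix $\a,\b>0$ and choose a decomposition $\r_{AB}=\sum_i p_i\ket{\psi_i}\bra{\psi_i}$ realizing the convex roof, so that $C(\r_{AB})=\sum_i p_i C(\ket{\psi_i})$. Granting the pure-state inequality $C(\ket{\psi})\ge \frac{\sqrt{2}}{\sqrt{k(k-1)}}\left(\norm{\mathcal{M}_{\a,\b}(\ket{\psi}\bra{\psi})}-\sqrt{(1+\a^2)(1+\b^2)}\right)$, I would sum over $i$ with weights $p_i$, use $\sum_i p_i=1$ to pull out the constant, and then combine the triangle inequality for the Ky Fan norm with the fact that $\r\mapsto\mathcal{M}_{\a,\b}(\r)$ is \emph{affine} (its $(1,1)$ entry $\a\b$ is constant and the three remaining blocks are linear in $\r$, so convex combinations are preserved): $\sum_i p_i\norm{\mathcal{M}_{\a,\b}(\ket{\psi_i}\bra{\psi_i})}\ge\norm{\sum_i p_i\mathcal{M}_{\a,\b}(\ket{\psi_i}\bra{\psi_i})}=\norm{\mathcal{M}_{\a,\b}(\r_{AB})}$. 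This is exactly the chain (\ref{c}).

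\emph{The pure-state inequality.} Write $\ket{\psi}=\sum_i\sqrt{\lambda_i}\ket{a_ib_i}$ in Schmidt form and use the split $\mathcal{M}_{\a,\b}(\ket{\psi}\bra{\psi})=M_1+M_2$ displayed before (\ref{t5}), where $M_1=\mathcal{M}_{\a,\b}(\s)$ for the separable state $\s=\sum_i\lambda_i\ket{a_ib_i}\bra{a_ib_i}$ and $M_2$ is the diagonal matrix holding only the off-diagonal realignment entries $\sqrt{\lambda_i\lambda_j}$ with $i\ne j$. Three observations close this step: (i) $M_1$ and $M_2$ are supported on disjoint sets of rows \emph{and} on disjoint sets of columns (the ``$\ket{ii}$'' coordinates together with the border row/column versus the ``$\ket{ij}$, $i\ne j$'' coordinates), so after a simultaneous permutation $M_1+M_2$ is block-diagonal and hence $\norm{M_1+M_2}_{KF}=\norm{M_1}_{KF}+\norm{M_2}_{KF}$; (ii) $\norm{M_2}_{KF}=2\sum_{i<j}\sqrt{\lambda_i\lambda_j}$ by inspection; (iii) $\norm{M_1}_{KF}\le\sqrt{(1+\a^2)(1+\b^2)}$ by Theorem~\ref{t1} applied to $\s$. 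Combining gives (\ref{t5}), and the estimate $C^2(\ket{\psi})\ge\frac{8}{k(k-1)}\left(\sum_{i<j}\sqrt{\lambda_i\lambda_j}\right)^2$ of \cite{chen2005concurrence} then yields, after taking square roots, the pure-state inequality (trivially true when its right-hand side is negative).

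\emph{Main obstacle.} I do not expect a genuine obstruction; the proof is an assembly of Theorem~\ref{t1}, the concurrence bound of \cite{chen2005concurrence} and convexity, and the only two spots that require care are in the middle step. First, the additivity in (i) is used in an essential way, and Hilbert--Schmidt orthogonality $\tr M_1M_2^{\dagger}=0$ alone is \emph{not} enough to force additivity of the Ky Fan norm; what is needed is the stronger disjoint-support structure (equivalently, mutually orthogonal row spaces and mutually orthogonal column spaces), which does hold here, so I would justify the equality in that form. Second, (\ref{t5}) is computed in the Schmidt basis, whereas the $\ket{\psi_i}$ in an arbitrary optimal decomposition are not Schmidt-diagonal in the fixed basis used to form $\mathcal{M}_{\a,\b}(\r_{AB})$; since $\mathcal{R}$ intertwines local-unitary conjugation $\r\mapsto(U_A\otimes U_B)\r(U_A\otimes U_B)^{\dagger}$ with left and right multiplication by fixed unitaries, and $vec(\r_A),vec(\r_B)$ transform compatibly, the quantity $\norm{\mathcal{M}_{\a,\b}(\cdot)}_{KF}$ is invariant under local unitaries, so the pure-state estimate is basis-independent and applies verbatim to each $\ket{\psi_i}$.
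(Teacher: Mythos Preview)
Your proof follows the paper's approach exactly: the same $M_1+M_2$ split for pure states, the same appeal to Theorem~\ref{t1} for $\norm{M_1}$ and to \cite{chen2005concurrence} for the concurrence--Schmidt bound, and the same convex-roof/triangle-inequality descent to mixed states spelled out in (\ref{c}). Your two flagged points are in fact improvements over the paper's writeup: the paper justifies $\norm{M_1+M_2}_{KF}=\norm{M_1}_{KF}+\norm{M_2}_{KF}$ only from $\tr M_1M_2^{\dagger}=0$, which as you correctly note is insufficient in general (your disjoint row/column--support argument is the right fix), and the local-unitary invariance of $\norm{\mathcal{M}_{\a,\b}(\cdot)}_{KF}$ that you supply is tacitly assumed but not stated in the paper.
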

\par
Next we consider the lower bound for CREN of a mixed state $\r_{AB}.$ Assume $\ket{\psi}_{AB}=\sum_{i=0}^{k-1}\sqrt{\lambda}\ket{ii}$ is a pure state, then
\begin{align}
N(\ket{\psi}_{AB})=\frac{2(\sum_{j<i}\sqrt{\lambda_j\lambda_i})}{k-1},\label{t8}
\end{align} 
based on $(\ref{t5}),$ 
\begin{align*}
2\sum_{i<j}\sqrt{\lambda_i\lambda_j}\ge \norm{\mathcal{M}_{\a,\b}(\ket{\psi}_{AB}\bra{\psi})}- \sqrt{(1+\a^2)(1+\b^2)},
\end{align*}
let $\sum_ip_i\ket{\psi_i}\bra{\psi_i}$ be the optimal decomposition for $\r$ such that $\mathcal{N}(\r)=\sum_ip_iN(\ket{\psi_i})$ be the optimal for $\r$ such that $\mathcal{N}(\r)=\sum_ip_iN(\ket{\psi_i}),$ we have
\begin{align}
\mathcal{N}(\r_{AB})=&\sum_ip_i N(\ket{\psi_i})\nonumber\\
\ge&\sum_i p_i \frac{\norm{\mathcal{M}_{\a,\b}(\ket{\psi_i}\bra{\psi_i})}-\sqrt{(1+\a^2)(1+\b^2)}}{k-1}\nonumber\\
\ge&\frac{\norm{\mathcal{M}_{\a,\b}(\r_{AB})}-\sqrt{(1+\a^2)(1+\b^2)}}{k-1}.
\end{align}
Based on the above analytics, we have
\begin{Theorem}\label{t7}
	Assume $\r_{AB}$ is a mixed state, for any $\a,\b\in\mathbb{R}^{+}$, we have
	\begin{align}
	\mathcal{N}(\rho_{AB})\ge \frac{\norm{\mathcal{M}_{\a,\b}(\r_{AB})}-\sqrt{(1+\a^2)(1+\b^2)}}{k-1}.\label{f1}
	\end{align}
\end{Theorem}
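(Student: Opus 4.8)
The plan is to push the inequality through the convex roof, so the work divides into a pure-state estimate followed by a convexity argument, essentially mirroring the derivation leading to $(\ref{c})$ in the concurrence case. For a pure state $\ket{\psi}_{AB}$ I would first put it in Schmidt form $\ket{\psi}=\sum_{i=0}^{k-1}\sqrt{\lambda_i}\ket{ii}$. Then $(\ket{\psi}\bra{\psi})^{T_B}$ decomposes into the $k$ diagonal entries $\lambda_i$ and, for each pair $i<j$, a $2\times2$ block on the coordinates $(ij),(ji)$ with eigenvalues $\pm\sqrt{\lambda_i\lambda_j}$; summing moduli gives $\norm{(\ket{\psi}\bra{\psi})^{T_B}}=\sum_i\lambda_i+2\sum_{i<j}\sqrt{\lambda_i\lambda_j}=1+2\sum_{i<j}\sqrt{\lambda_i\lambda_j}$, which is exactly the pure-state identity $(\ref{t8})$ for $N(\ket{\psi}_{AB})$. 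Next I would invoke $(\ref{t5})$ — itself a consequence of Theorem \ref{t1} applied to the separable state $\sigma=\sum_i\lambda_i\ket{ii}\bra{ii}$ together with the orthogonal splitting $\mathcal{M}_{\a,\b}(\ket{\psi}\bra{\psi})=M_1+M_2$ with $\tr(M_1M_2^{\dagger})=0$ (which makes the Ky Fan norm additive on $M_1,M_2$) — to obtain $2\sum_{i<j}\sqrt{\lambda_i\lambda_j}\ge\norm{\mathcal{M}_{\a,\b}(\ket{\psi}\bra{\psi})}-\sqrt{(1+\a^2)(1+\b^2)}$. Combining the two displays yields the pure-state bound $(k-1)N(\ket{\psi}_{AB})\ge\norm{\mathcal{M}_{\a,\b}(\ket{\psi}\bra{\psi})}-\sqrt{(1+\a^2)(1+\b^2)}$.

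For a mixed state $\r_{AB}$ I would fix an optimal decomposition $\r_{AB}=\sum_ip_i\ket{\psi_i}\bra{\psi_i}$ attaining the convex roof, so that $\mathcal{N}(\r_{AB})=\sum_ip_iN(\ket{\psi_i})$, apply the pure-state bound to each summand, and then exploit two structural facts. First, $\mathcal{M}_{\a,\b}$ is affine in its argument and, since $\sum_ip_i=1$, the constant corner $\a\b$ is reproduced under convex combination, so $\sum_ip_i\mathcal{M}_{\a,\b}(\ket{\psi_i}\bra{\psi_i})=\mathcal{M}_{\a,\b}(\r_{AB})$. Second, the Ky Fan norm is a genuine norm, so the triangle inequality gives $\norm{\mathcal{M}_{\a,\b}(\r_{AB})}\le\sum_ip_i\norm{\mathcal{M}_{\a,\b}(\ket{\psi_i}\bra{\psi_i})}$. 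Feeding these into $\mathcal{N}(\r_{AB})=\sum_ip_iN(\ket{\psi_i})$ and using $\sum_ip_i\sqrt{(1+\a^2)(1+\b^2)}=\sqrt{(1+\a^2)(1+\b^2)}$ produces $(\ref{f1})$.

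The one step that genuinely needs care — and the main obstacle I would flag — is that $(\ref{t5})$ was established only for $\ket{\psi}$ expressed in the computational Schmidt basis $\sum_i\sqrt{\lambda_i}\ket{ii}$, whereas the $\ket{\psi_i}$ in an arbitrary optimal decomposition need not be so aligned; one must verify that both $N(\cdot)$ and $\norm{\mathcal{M}_{\a,\b}(\cdot)}_{KF}$ are invariant under local unitaries. For $\norm{\mathcal{M}_{\a,\b}(\cdot)}_{KF}$ this follows from $\mathcal{R}\big((U_A\otimes U_B)\r(U_A\otimes U_B)^{\dagger}\big)=(\bar U_A\otimes U_A)\mathcal{R}(\r)(\bar U_B\otimes U_B)^{T}$ and $vec(U_A\r_AU_A^{\dagger})=(\bar U_A\otimes U_A)vec(\r_A)$ (similarly for $B$), so that $\mathcal{M}_{\a,\b}$ is carried to $(1\oplus(\bar U_A\otimes U_A))\,\mathcal{M}_{\a,\b}(\r)\,(1\oplus(\bar U_B\otimes U_B)^{T})$ by unitaries, leaving all singular values unchanged. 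Once this invariance is in place, the pure-state bound applies to every $\ket{\psi_i}$ in its own Schmidt basis and the rest of the argument is routine: eigenvalue bookkeeping for the partial transpose, one appeal to Theorem \ref{t1}, and convexity of the norm.
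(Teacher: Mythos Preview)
Your proposal is correct and follows the same route as the paper: establish the pure-state bound from $(\ref{t8})$ and $(\ref{t5})$, then push it through an optimal convex-roof decomposition using affinity of $\mathcal{M}_{\a,\b}$ and the triangle inequality for $\norm{\cdot}_{KF}$. You are in fact more careful than the paper in explicitly verifying the local-unitary invariance of $\norm{\mathcal{M}_{\a,\b}(\cdot)}_{KF}$ needed to apply $(\ref{t5})$ to the $\ket{\psi_i}$ in an arbitrary decomposition; the paper uses this step without comment.
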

\par 
Then we will present an example which shows our results are better than \cite{de2007lower}.
\begin{example}
	In this example, we consider the $3\otimes3$ PPT entangled state
	\begin{align*}
	\r=&\frac{1}{4}(I-\sum_i\ket{\psi_i}\bra{\psi_i}),\nonumber\\
	\ket{\psi_0}=&\frac{\ket{0}(\ket{0}-\ket{1})}{\sqrt{2}},\\
	\ket{\psi_1}=&\frac{(\ket{0}-\ket{1})\ket{2}}{\sqrt{2}},\\
	\ket{\psi_2}=&\frac{\ket{2}(\ket{1}-\ket{2})}{\sqrt{2}},\\\ket{\psi_3}=&\frac{(\ket{1}-\ket{2})\ket{0}}{\sqrt{2}},\\
	\ket{\psi_4}=&\frac{(\ket{0}+\ket{1}+\ket{2})(\ket{0}+\ket{1}+\ket{2})}{3}.
	\end{align*}
\par When choosing $\a=\b=1,$ according to Theorem \ref{t6}, $C(\rho_{AB})\ge 0.05399.$ By using the theorem 1 in \cite{de2007lower}, we have $C(\rho)\ge 0.052$, hence our bound is better than it.  In \cite{li2020improved}, the authors showed that $C(\r)\ge0.05554,$ here when we take $\a=\b=100$, the lower bound of $C(\r)$ is 0.055549.\par 
Next we consider a state by mixing $\rho$ with the white noise, 
\begin{align*}
\rho_p=pI/9+(1-p)\rho,
\end{align*}
here $I$ is the identity, $p\in[0,1].$ In Fig. \ref{fig1}, the orange line is the lower bound of $C(\r_p)$ obtained by Theorem 2 in \cite{de2007lower}. From this bound, $\r_p$ is entangled when $p\in [0,0.0507].$ By Theorem \ref{t6}, when taking $\a=\b=5,$ we obtain a lower bound of $C(\r_p)$ and plot it in Fig. \ref{fig1} as a blue line. There we have $\r_p$ is entangled when $p\in[0,0.1177]$. \par
\begin{figure}[t]
	\centering
	\includegraphics[width=90mm]{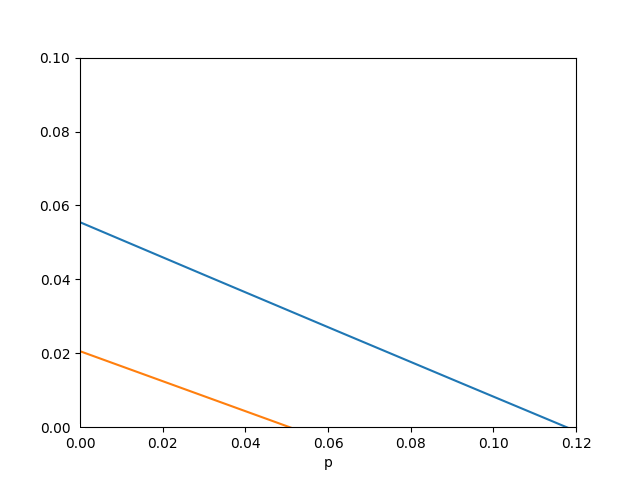}\\
	\caption{ Lower bound of $C(\r_p).$ The blue line is the bound given by Theorem \ref{t6}, while the orange line is the bound obtained by Theorem 2 in \cite{de2007lower}. }\label{fig1}
\end{figure}
In Fig. \ref{fig2}, we plot the bound of $\mathcal{N}(\r_p)$ obtained by Theorem \ref{t7}, there the orange line is on the bound of $\mathcal{N}(\r_p)$ with $\a=\b=7,$ the blue line is on the bound of $\mathcal{N}(\r_p)$ with $\a=\b=1.$
From the figure, we can see when taking $\a=\b=7,$ the lower bound is better.
\begin{figure}[b]
	\includegraphics[width=90mm]{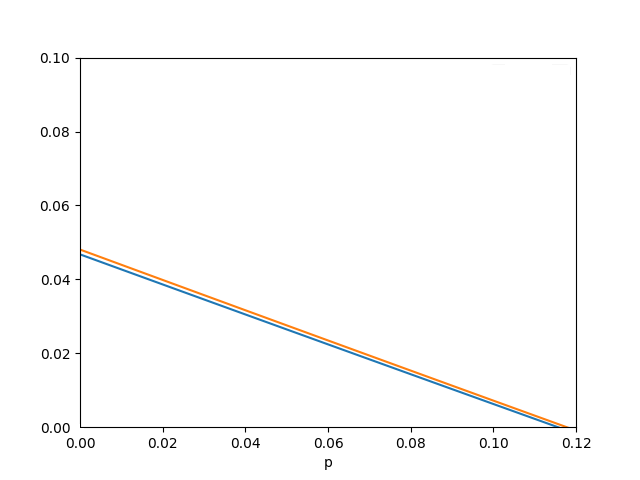}\\
	\caption{  Lower bound of CREN for the state $\r_p$. The orange line is the bound of $\mathcal{N}(\r_p)$ with $\a=\b=7,$ and the blue line is the bound with $\a=\b=1.$ }\label{fig2}
\end{figure}
\end{example}
\section{Conclusion} 
\indent To detect the entanglement of a bipartite state is essential in quantum entanglement theory. In this paper, we  presented a class of separability criteria which are better than CCNR criterion. Moreover, we proved that the detection power of our criteria is as strong as the enhanced realignment criterion for bipartite states corresponding to real density matrices. At last, we derived analytical lower bounds of the concurrence and CREN. Moreover, our methods here can also be used to obtain the lower bounds of some multipartite entanglement measures. We hope our work could shed some light on related studies.
\bibliographystyle{IEEEtran}
\bibliography{ref}

\begin{thebibliography}{10}
\providecommand{\url}[1]{#1}
\csname url@samestyle\endcsname
\providecommand{\newblock}{\relax}
\providecommand{\bibinfo}[2]{#2}
\providecommand{\BIBentrySTDinterwordspacing}{\spaceskip=0pt\relax}
\providecommand{\BIBentryALTinterwordstretchfactor}{4}
\providecommand{\BIBentryALTinterwordspacing}{\spaceskip=\fontdimen2\font plus
\BIBentryALTinterwordstretchfactor\fontdimen3\font minus
  \fontdimen4\font\relax}
\providecommand{\BIBforeignlanguage}[2]{{%
\expandafter\ifx\csname l@#1\endcsname\relax
\typeout{** WARNING: IEEEtran.bst: No hyphenation pattern has been}%
\typeout{** loaded for the language `#1'. Using the pattern for}%
\typeout{** the default language instead.}%
\else
\language=\csname l@#1\endcsname
\fi
#2}}
\providecommand{\BIBdecl}{\relax}
\BIBdecl

\bibitem{horodecki2009quantum}
R.~Horodecki, P.~Horodecki, M.~Horodecki, and K.~Horodecki, ``Quantum
  entanglement,'' \emph{Reviews of modern physics}, vol.~81, no.~2, p. 865,
  2009.

\bibitem{plenio2014introduction}
M.~B. Plenio and S.~S. Virmani, ``An introduction to entanglement theory,'' in
  \emph{Quantum Information and Coherence}.\hskip 1em plus 0.5em minus
  0.4em\relax Springer, 2014, pp. 173--209.

\bibitem{ekert1991quantum}
A.~K. Ekert, ``Quantum cryptography based on bell’s theorem,'' \emph{Physical
  review letters}, vol.~67, no.~6, p. 661, 1991.

\bibitem{bennett1993teleporting}
C.~H. Bennett, G.~Brassard, C.~Cr{\'e}peau, R.~Jozsa, A.~Peres, and W.~K.
  Wootters, ``Teleporting an unknown quantum state via dual classical and
  einstein-podolsky-rosen channels,'' \emph{Physical review letters}, vol.~70,
  no.~13, p. 1895, 1993.

\bibitem{bennett1992communication}
C.~H. Bennett and S.~J. Wiesner, ``Communication via one-and two-particle
  operators on einstein-podolsky-rosen states,'' \emph{Physical review
  letters}, vol.~69, no.~20, p. 2881, 1992.

\bibitem{peres1996separability}
A.~Peres, ``Separability criterion for density matrices,'' \emph{Physical
  Review Letters}, vol.~77, no.~8, p. 1413, 1996.

\bibitem{gurvits2003classical}
L.~Gurvits, ``Classical deterministic complexity of edmonds' problem and
  quantum entanglement,'' in \emph{Proceedings of the thirty-fifth annual ACM
  symposium on Theory of computing}, 2003, pp. 10--19.

\bibitem{rudolph2005further}
O.~Rudolph, ``Further results on the cross norm criterion for separability,''
  \emph{Quantum Information Processing}, vol.~4, no.~3, pp. 219--239, 2005.

\bibitem{chen2003matrix}
K.~Chen and L.-A. Wu, ``A matrix realignment method for recognizing
  entanglement,'' \emph{Quantum Inf. Comput.}

\bibitem{guhne2006entanglement}
O.~G{\"u}hne, M.~Mechler, G.~T{\'o}th, and P.~Adam, ``Entanglement criteria
  based on local uncertainty relations are strictly stronger than the
  computable cross norm criterion,'' \emph{Physical Review A}, vol.~74, no.~1,
  p. 010301, 2006.

\bibitem{de}
J.~I. De~Vicente, ``Separability criteria based on the bloch representation of
  densiity matrices,'' \emph{Quantum Inf. Comput.}

\bibitem{zhang2008entanglement}
C.-J. Zhang, Y.-S. Zhang, S.~Zhang, and G.-C. Guo, ``Entanglement detection
  beyond the computable cross-norm or realignment criterion,'' \emph{Physical
  Review A}, vol.~77, no.~6, p. 060301, 2008.

\bibitem{shen2015separability}
S.-Q. Shen, M.-Y. Wang, M.~Li, and S.-M. Fei, ``Separability criteria based on
  the realignment of density matrices and reduced density matrices,''
  \emph{Physical Review A}, vol.~92, no.~4, p. 042332, 2015.

\bibitem{shang2018enhanced}
J.~Shang, A.~Asadian, H.~Zhu, and O.~G{\"u}hne, ``Enhanced entanglement
  criterion via symmetric informationally complete measurements,''
  \emph{Physical Review A}, vol.~98, no.~2, p. 022309, 2018.

\bibitem{sarbicki2020family}
G.~Sarbicki, G.~Scala, and D.~Chru{\'s}ci{\'n}ski, ``Family of multipartite
  separability criteria based on a correlation tensor,'' \emph{Physical Review
  A}, vol. 101, no.~1, p. 012341, 2020.

\bibitem{sarbicki2020enhanced}
------, ``Enhanced realignment criterion vs linear entanglement witnesses,''
  \emph{Journal of Physics A: Mathematical and Theoretical}, vol.~53, no.~45,
  p. 455302, 2020.

\bibitem{wootters1998entanglement}
W.~K. Wootters, ``Entanglement of formation of an arbitrary state of two
  qubits,'' \emph{Physical Review Letters}, vol.~80, no.~10, p. 2245, 1998.

\bibitem{vedral1998entanglement}
V.~Vedral and M.~B. Plenio, ``Entanglement measures and purification
  procedures,'' \emph{Physical Review A}, vol.~57, no.~3, p. 1619, 1998.

\bibitem{vidal1999robustness}
G.~Vidal and R.~Tarrach, ``Robustness of entanglement,'' \emph{Physical Review
  A}, vol.~59, no.~1, p. 141, 1999.

\bibitem{terhal2000schmidt}
B.~M. Terhal and P.~Horodecki, ``Schmidt number for density matrices,''
  \emph{Physical Review A}, vol.~61, no.~4, p. 040301, 2000.

\bibitem{vidal2000entanglement}
G.~Vidal, ``Entanglement monotones,'' \emph{Journal of Modern Optics}, vol.~47,
  no. 2-3, pp. 355--376, 2000.

\bibitem{wei2003geometric}
T.-C. Wei and P.~M. Goldbart, ``Geometric measure of entanglement and
  applications to bipartite and multipartite quantum states,'' \emph{Physical
  Review A}, vol.~68, no.~4, p. 042307, 2003.

\bibitem{christandl2004squashed}
M.~Christandl and A.~Winter, ``“squashed entanglement”: an additive
  entanglement measure,'' \emph{Journal of mathematical physics}, vol.~45,
  no.~3, pp. 829--840, 2004.

\bibitem{lee2003convex}
S.~Lee, D.~P. Chi, S.~D. Oh, and J.~Kim, ``Convex-roof extended negativity as
  an entanglement measure for bipartite quantum systems,'' \emph{Physical
  Review A}, vol.~68, no.~6, p. 062304, 2003.

\bibitem{chen2005concurrence}
K.~Chen, S.~Albeverio, and S.-M. Fei, ``Concurrence of arbitrary dimensional
  bipartite quantum states,'' \emph{Physical review letters}, vol.~95, no.~4,
  p. 040504, 2005.

\bibitem{brandao2005quantifying}
F.~G. Brandao, ``Quantifying entanglement with witness operators,''
  \emph{Physical Review A}, vol.~72, no.~2, p. 022310, 2005.

\bibitem{de2007lower}
J.~I. de~Vicente, ``Lower bounds on concurrence and separability conditions,''
  \emph{Physical Review A}, vol.~75, no.~5, p. 052320, 2007.

\bibitem{Chen2012}
Z.-H. Chen, Z.-H. Ma, O.~Gühne, and S.~Severini, ``Estimating entanglement
  monotones with a generalization of the wootters formula,'' \emph{Physical
  Review Letters}, vol. 109, no.~20, p. 200503, 2012.

\bibitem{li2020improved}
M.~Li, Z.~Wang, J.~Wang, S.~Shen, and S.-m. Fei, ``Improved lower bounds of
  concurrence and convex-roof extended negativity based on bloch
  representations,'' \emph{Quantum Information Processing}, vol.~19, no.~4, pp.
  1--11, 2020.

\bibitem{bhatia2013matrix}
R.~Bhatia, \emph{Matrix analysis}.\hskip 1em plus 0.5em minus 0.4em\relax
  Springer Science \& Business Media, 2013, vol. 169.

\bibitem{bruss2000construction}
D.~Bru{\ss} and A.~Peres, ``Construction of quantum states with bound
  entanglement,'' \emph{Physical Review A}, vol.~61, no.~3, p. 030301, 2000.

\end{thebibliography}
\end{document}